\NewDocumentCommand{\evalat}{sO{\big}mm}{%
  \IfBooleanTF{#1}
   {\mleft. #3 \mright|_{#4}}
   {#3#2|_{#4}}%
}
\newcommand{\norm}[1]{\left\lVert#1\right\rVert}
\newtheorem{proposition}{Proposition}
\newtheorem{remark}{Remark}
\def\blfootnote{\xdef\@thefnmark{}\@footnotetext}
\DeclarePairedDelimiter\abs{\lvert}{\rvert} 
\begin{document}
\title{\huge{A Simple Method for the Performance Analysis of Fluid Antenna Systems under Correlated Nakagami-$m$ Fading }}
\author{Jos\'e~David~Vega-S\'anchez, \textit{Member, IEEE}, Luis~Urquiza-Aguiar, \textit{Member, IEEE}, Martha Cecilia Paredes Paredes, \textit{Senior Member, IEEE}, and Diana~Pamela~Moya~Osorio, \textit{Senior Member, IEEE} }

\maketitle

\blfootnote{\noindent Manuscript received MONTH xx, YEAR; revised XXX. The review of this paper was coordinated by XXXX. The work of Luis Urquiza-Aguiar was supported by the Escuela Polit\'ecnica Nacional. The work of D. P. M. Osorio is supported by the Academy of Finland, project FAITH under Grant 334280.}

\blfootnote{\noindent
J.~D.~Vega-S\'anchez is with the Faculty of Engineering and Applied Sciences (FICA), Telecommunications Engineering, Universidad de Las Am\'ericas
(UDLA), Quito 170124, Ecuador (E-mail: $\rm jose.vega.sanchez@udla.edu.ec$), and also with the Departamento de Electr\'onica, Telecomunicaciones y Redes de Informaci\'on, Escuela Polit\'ecnica Nacional (EPN),
Quito,  170525, Ecuador. 
}
\blfootnote{\noindent 
 L.~Urquiza-Aguiar and M. C. Paredes Paredes are with the  
Departamento de Electr\'onica, Telecomunicaciones y Redes de Informaci\'on, Escuela Polit\'ecnica Nacional (EPN), Ladr\'on de Guevara E11-253, Quito,  170525, Ecuador. (e-mail: cecilia.paredes@epn.edu.ec; luis.urquiza@epn.edu.ec).
}

\blfootnote{\noindent
D.~P.~Moya~Osorio is with the Centre for Wireless Communications (CWC), University of Oulu, Finland (e-mail: diana.moyaosorio@oulu.fi)
}



\vspace{-12.5mm}
\begin{abstract}
By recognizing the tremendous flexibility of the emerging fluid antenna system (FAS), which allows dynamic reconfigurability of the location of the antenna within a given space, this paper investigates the performance of a single-antenna FAS over spatially correlated Nakagami-$m$ fading channels. Specifically,  simple and highly accurate closed-form approximations for the cumulative density function of the FAS channel and the outage probability of the proposed system are obtained by employing a novel asymptotic matching method, which is an improved version of the well-known moment matching. With this method, the outage probability can be computed {simply} without incurring complex multi-fold integrals, thus requiring negligible computational effort. Finally, the accuracy of the proposed approximations is validated, and it is shown that the FAS can meet or even exceed the performance attained by the conventional maximal ratio combining (MRC) technique.
\end{abstract}

\begin{IEEEkeywords}
Asymptotic matching, fluid antenna system, nakagami-$m$ fading, spatial correlation, outage probability.
\end{IEEEkeywords}

\vspace{-2.5mm}
\section{Introduction}
The fifth-generation (5G) of wireless mobile networks have recently been deployed worldwide, so industry and academia have already started the race to define the shape red {of the} future sixth-generation (6G). Very recently, a technology that has been gaining momentum is the \ac{FAS}, which is a new paradigm of antenna systems where antennas are equipped with software-controllable fluid structure  (e.g., Eutectic Gallium-Indium, Mercury, Galinstan, etc.) that allows {dynamic} reconfigurability of its position and shape within a given space{\footnote{{Interested readers can refer to \cite{Ana} for information on fluid antenna prototypes.}}}. Particularly, FAS may help overcome practical limitations of {using multiple antennas} in size-constrained devices, and the cost of \ac{RF} chains \cite{Wong,bruce}. The fundamental single fluid antenna is built of one \ac{RF} chain and $N$ fixed locations, so-called ports, distributed in a linear space. Unlike conventional {spatial diversity techniques (e.g., \ac{MRC})}, FAS allows an antenna to freely switch its position among the ports to obtain a more robust channel gain or lower interference, thus providing {remarkable} gains in diversity{,} multiplexing, 
 and interference-free communications \cite{New}.


A plethora of works have been recently
focused {on investigating} the performance of FAS in wireless communications, where metrics such as ergodic capacity and \ac{OP} metrics have been investigated {in} different settings.
For instance, in \cite{Yangyang}, Wong et al. demonstrated that a single-antenna FAS outperforms the traditional \ac{MRC} in terms of the \ac{OP} when the number of ports at the fluid antenna is large enough. Also{,} in~\cite{Wong,correlation}, Wong et al. studied the achievable performance of FAS in arbitrarily correlated Rayleigh fading channels.  Khammassi et al. proposed an approximate expression for the FAS relative channel distribution in \cite{Ref3}, where a two-stage approach was proposed. The first phase reduces the number of multi-fold integrals of the OP, while the second represents the OP in a single-integral form by assuming correlated Rayleigh fading channels. Tlebaldiyeva et al. considered a more general small-scale fading channel model on the FAS, in \cite{Tlebaldiyeva1}, where the OP was found in a single-integral form for a single-antenna $N$-port FAS over spatially Nakagami-$m$ channel. In \cite{Ref4}, by taking advantage of stochastic geometry tools,  Skouroumounis and  Krikidis derived a closed-form expression for the OP in fluid antenna for large-scale cellular networks. Moreover, Ghadi et al. derived a closed-form formulation of the OP performance in \cite{Ref5} by adopting copula theory to characterize the correlation model (e.g., Frank, Clayton, and Gumbel) between fading channel coefficients. Very recently,  the OP behavior of FAS-aided Terahertz communication networks under correlated $\alpha$-$\mu$ fading channels for non-diversity and diversity FAS receivers was investigated by Tlebaldiyeva et al. in \cite{Tlebaldiyeva2}. Therein, as in \cite{Tlebaldiyeva1}, the OP was derived in single-integral expression due to the mathematical intractability of both the $\alpha$-$\mu$ channel model and the diversity FAS underlying system.

Based on the above considerations and motivated by the potential of FAS
to provide diversity and remarkable capacity benefits for forthcoming networks, this work {exploits} the advantages of a novel asymptotic matching method to approximate the  FAS’s channel distribution into a single approach. {Despite the FAS system's intricacy, the authors aim to provide analytically tractable expressions for the outage metric without incurring the prohibitive complexity of special functions or multi/single-fold integrals, which have already been used in previous works.} Specifically, a FAS that experiences correlated Nakagami-$m$ fading channels is considered, where we propose to approximate the equivalent \ac{CDF} of the FAS by a simple Gamma distribution, where the fitting parameters are estimated via the asymptotic matching method, proposed in \cite{Perim}. With the \ac{CDF} of FAS at hand, we derive a simple and highly accurate closed-form expression of the \ac{OP} valid to the practical correlation models introduced in \cite{correlation}. {It is worth mentioning that such a method outperforms those approaches based solely on the moment matching method (MoM)\footnote{As stated in \cite{wang2003simple}, some performance metrics in communications systems, such as the OP or bit error rate, are dominated by the channel asymptote at medium to high \ac{SNR}. So, the asymptotic matching method, compared to the MoM, provides an excellent fit in medium to high \ac{SNR}, which is a crucial regime in practice. Conversely, the MoM delivers good performance at low \ac{SNR}, a range of little importance in practical applications.} and improves the computational complexity of cumbersome traditional exact formulations. Compared to the latter, obtained in terms of the correlated join \ac{PDF}\footnote{{This solution becomes tedious as the number of ports increases.}}, our approach presents a simple form of the \ac{OP} in terms of well-known functions in the communication theory community, facilitating its numerical evaluation in any computer software.} Finally, useful insights into the impact of propagation conditions and the number of ports over the \ac{OP} performance are also provided.
\vspace{-4.5mm}
\section{System and Channel Models}
\begin{figure}[t]
\centering
\psfrag{J}[Bc][Bc][0.7]{$\mathrm{Transmitter}$}
\psfrag{K}[Bc][Bc][0.7]{$\mathrm{FAS-receiver}$}
\psfrag{F}[Bc][Bc][0.7]{$1$}
\psfrag{G}[Bc][Bc][0.7]{$2$}
\psfrag{H}[Bc][Bc][0.7]{$3$}
\psfrag{I}[Bc][Bc][0.7]{$N$}
\psfrag{L}[Bc][Bc][0.7]{$\mathrm{Ports}$}
\psfrag{P}[Bc][Bc][0.7]{$\mathrm{Fluid}$}
\psfrag{M}[Bc][Bc][0.7]{$\mathrm{ antenna}$}
\psfrag{O}[Bc][Bc][0.7]{$\mathrm{switch}$}
\psfrag{B}[Bc][Bc][0.7]{$\mathrm{radio~chip}$}
\psfrag{A}[Bc][Bc][0.7]{$\mathrm{fluid~container}$}
\psfrag{C}[Bc][Bc][0.7]{$\mathrm{RF~chain}$}
\psfrag{E}[Bc][Bc][0.7]{$\max (\abs{g_k}{})$}
\psfrag{D}[Bc][Bc][0.7][90]{$W\lambda$}
 \includegraphics[width=0.8\linewidth]{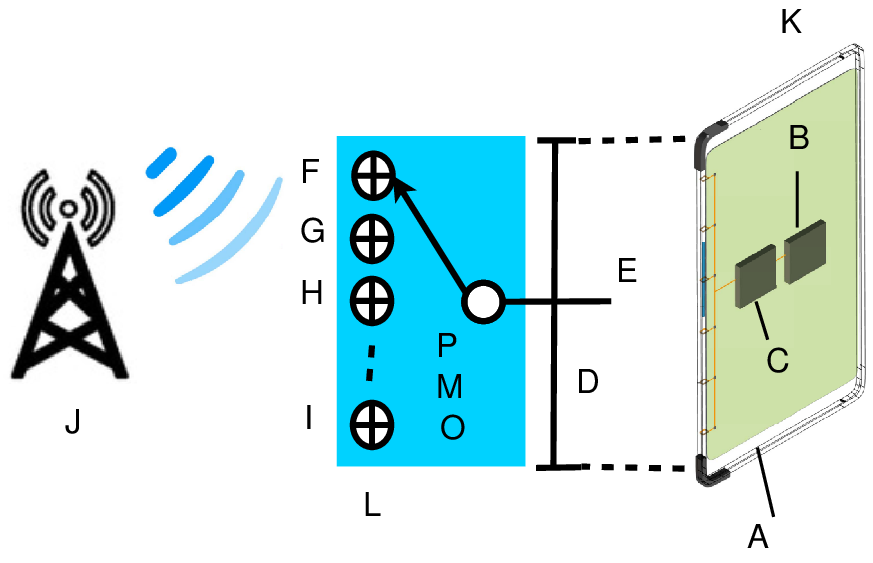}
\caption{System Model for single-antenna $N$-port FAS.}
\label{SM}
\vspace{-4.5mm}
\end{figure}
We consider a FAS consisting of a single-antenna transmitter (Tx)
communicating with a fluid single-antenna that can move freely along 
$N$ ports equally distributed on a linear space of length
$W\lambda$, where $\lambda$ is the wavelength{,} and $W$ is the antenna size, as illustrated in Fig. \ref{SM}. The FAS contains one \ac{RF} chain, thus only one port  can be activated for communication, so the received signal at the $k$-th port can be defined as 
\begin{equation}
\label{eq1}
Y_k=h_kX+Z_k,
\end{equation}
with $h_k$ being modeled as a correlated Nakagami-$m$ fading channel since antenna ports are located close to each other. Also, $X$ is the information signal, and $Z_k$ is the \ac{AWGN} at every port. We assume that FAS can switch the port with the strongest signal for communication, which can be expressed as
\begin{equation}
\label{eq2}
g_{\mathrm{FAS}} =\max \left ( \left | g_1   \right |, \left | g_2\right |, \cdots, \left | g_k\right | \right ), \hspace{2mm} \textit{for}\hspace{2mm} k \in  
  \left\{1,2,\cdots, N \right\},
\end{equation}
where $g_{i}= \left | h_{i}\right |^{2}$ for $i\in\left \{ 1,\cdots,N \right \}$ denotes the channel gain of each port in the \ac{FAS}.
Hence, the received \ac{SNR}, for the FAS can be expressed by
\begin{equation}\label{eq3}
\gamma=\frac{P \left | g_{\mathrm{FAS}}\right | }{N_0}= \overline{\gamma} \left | h_{\mathrm{FAS}}\right |, 
\end{equation}
where $\overline{\gamma}=\frac{P}{N_0}$ is the average transmit \ac{SNR}{,} with $P$ being the transmit power and $N_0$ the noise power. Considering this, {we aim} to provide an approximate statistical model for $g_{\mathrm{FAS}}$, which can be used to obtain the \ac{OP} distribution straightforwardly.
\section{Performance Analysis}
In this section, we consider the \ac{OP} to
evaluate the performance of the FAS,  which is defined as the probability
that the \ac{SNR} $\gamma$ is less than a threshold rate, $\gamma_{th}$. Therefore, from~\cite[Eq.~(10)]{Tlebaldiyeva1}, the \ac{OP} formulation over correlated Nakagami-$m$ random variables (RVs) can be formulated as
\begin{align}\label{eq4}
P_{\mathrm{out}}(\gamma_{th})&= \frac{2^m m }{\Gamma(m)\Omega_1^{2m} }\int_{0}^{\sqrt{\frac{\gamma_{th}}{\overline{\gamma}}}}r_1^{2m}\exp\left ( -\frac{m r_1^2}{\Omega_1^{2}} \right )\nonumber \\ \times & \prod_{k=2}^{N}\left ( 1-Q_m\left ( \sqrt{\tfrac{2m\mu_k^2r_1^2}{\Omega_1^{2}\left ( 1-\mu_k^2 \right )}} \right ),\sqrt{\tfrac{2m\gamma_{th}}{\Omega_k^{2}\left ( 1-\mu_k^2 \right )\overline{\gamma}}} \right )dr_1,
\end{align}
where $\Gamma(\cdot)$ is the Gamma function, $Q_m(\cdot,\cdot)$ denotes the $m$-order Marcum Q-function, $m$ is the
severity fading parameter, and $\Omega_k^2 $ stands for the average channel power of Nakagami-$m $ distribution. Furthermore,  the correlation coefficient, denoted by, $\mu_k$, can be defined as~\cite[Eq.~(2)]{Tlebaldiyeva1} when the $(N-1)$ ports are referenced to the first port or can be expressed as~\cite{correlation} \begin{equation}\label{eq5}
\mu^2 =\left | \frac{2}{N(N-1)}\sum_{k=1}^{N-1}(N-k)J_0\left ( \frac{2\pi kW}{N-1} \right )\right |, \hspace{1mm} \textit{for}\hspace{1mm} \mu_k=\mu~ \forall k, 
\end{equation}
where it is assumed that all the ports do not have a reference port or any port is a reference to any other port, and $J_0(\cdot)$
denotes the zero-order Bessel function of the first kind. It is worth highlighting that as the number of ports increases, the exact solution in~\eqref{eq4} becomes more costly, prone to convergence and instability problems, or even impracticable. To circumvent the referred limitation of the exact \ac{OP}, an approximation for~\eqref{eq4} is proposed by using the asymptotic matching method \cite{Perim}{,} as stated in the following proposition.
\begin{proposition}\label{Propos1}
An approximate expression for the \ac{OP} of a FAS undergoing correlated Nakagami-$m$ RVs can be obtained as
\begin{align}\label{eqs6}
P_{\mathrm{out}}(\gamma_{th})&\approx \frac{\Upsilon(\alpha,\frac{\gamma_{th}}{\beta \overline{\gamma}})}{\Gamma{\left ( \alpha \right )}},
\end{align}
where is $\Upsilon(\cdot,\cdot)$, is the lower incomplete gamma function~\cite[Eq.~(6.5.2)]{Abramowitz} and
\begin{subequations}
\label{eqs7}
	\begin{align}
	\label{eq7a}
	\alpha  =& mN, \hspace{3mm} \beta=\left ( \frac{1}{\Gamma(\alpha)a_0 \alpha } \right )^{1/\alpha},
	\\
	\label{eq7b}
	a_0 =&\frac{m^{m-1}}{\Gamma(m)\Omega_1^{2m} m!^{N-1}}\prod_{k=2}^{N}\left ( \frac{m}{\Omega_k^2\left ( 1-\mu_k^2 \right )} \right )^m.
	\end{align}
\end{subequations}
\end{proposition}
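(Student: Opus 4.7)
The plan is to model $g_{\mathrm{FAS}}$ as a Gamma random variable and fix its shape $\alpha$ and scale $\beta$ so that the resulting CDF reproduces the leading-order behaviour of the exact expression in~\eqref{eq4} as $\gamma_{th}/\overline{\gamma}\to 0$, i.e.\ in the high-SNR regime that drives the outage metric. This is the defining spirit of the asymptotic matching method of~\cite{Perim}, which typically produces sharper approximations than the standard moment matching whenever the target observable is tail-dominated, as emphasised in the footnote preceding the proposition.

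The first step is to extract the high-SNR asymptote of~\eqref{eq4}. I would substitute the small-argument expansion of the Marcum Q-function, $1-Q_m(a,b)\sim b^{2m}/(2^m\,m!)$ as $a,b\to 0$, into each factor of the product, and replace $\exp(-m r_1^2/\Omega_1^2)$ by $1$ to leading order. With the arguments $a=\sqrt{2m\mu_k^2 r_1^2/(\Omega_1^2(1-\mu_k^2))}$ and $b=\sqrt{2m\gamma_{th}/(\Omega_k^2(1-\mu_k^2)\overline{\gamma})}$ read off~\eqref{eq4}, each Marcum factor becomes independent of $r_1$ and proportional to $(\gamma_{th}/\overline{\gamma})^m$. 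The surviving integral is of the form $\int_0^{\sqrt{\gamma_{th}/\overline{\gamma}}}r_1^{2m-1}\,dr_1$ and contributes a further $(1/(2m))(\gamma_{th}/\overline{\gamma})^m$. Grouping all prefactors, one should obtain
\[
P_{\mathrm{out}}(\gamma_{th})\;\sim\; a_0\left(\frac{\gamma_{th}}{\overline{\gamma}}\right)^{mN},
\]
with $a_0$ precisely~\eqref{eq7b}; the total exponent $mN$ decomposes as $m$ from the reference-port PDF integration plus $m(N-1)$ from the $N-1$ Marcum factors, which is consistent with the diversity order expected from $N$ correlated Nakagami branches.

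The second step is to match this asymptote against the Gamma template. Expanding $F(x)=\Upsilon(\alpha,x/\beta)/\Gamma(\alpha)$ near the origin via $\Upsilon(\alpha,z)\sim z^{\alpha}/\alpha$ gives $F(x)\sim (x/\beta)^{\alpha}/(\alpha\,\Gamma(\alpha))$. Equating exponents with the asymptote of Step~1 forces $\alpha=mN$, while equating leading coefficients then yields $\beta=(1/(a_0\alpha\Gamma(\alpha)))^{1/\alpha}$, which are exactly the parameters reported in~\eqref{eq7a}. Setting $x=\gamma_{th}/\overline{\gamma}$ then returns~\eqref{eqs6}.

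The main obstacle is the asymptotic bookkeeping in Step~1: one has to argue that, uniformly on the shrinking interval $r_1\in[0,\sqrt{\gamma_{th}/\overline{\gamma}}]$, the leading behaviour of $1-Q_m(a,b)$ really is $b^{2m}/(2^m m!)$, and that the discarded corrections — both the higher-order terms of the Marcum expansion in the coupled variables $(a,b)$ and the Taylor remainder of the exponential — contribute only sub-leading powers of $\gamma_{th}/\overline{\gamma}$ and therefore do not perturb the coefficient $a_0$. Once this is controlled, all that remains is algebraic simplification of the constants, after which the identification of $\alpha$ and $\beta$ is immediate.
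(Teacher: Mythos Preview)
Your proposal is correct and follows the same asymptotic-matching philosophy as the paper, but the route to the leading coefficient $a_0$ differs in one intermediate step. The paper does not immediately discard the $r_1$-dependence of the Marcum factors: it invokes an approximation of the form $1-Q_m(a,b)\approx (b^2/2)^m e^{-a^2/2}/m!$ that retains the Gaussian factor in $a$, combines all exponentials into a single $\exp\!\big(-c\,r_1^2\big)$ with $c=\tfrac{m}{\Omega_1^2}\big(1+\sum_{k=2}^N\mu_k^2/(1-\mu_k^2)\big)$, evaluates the integral in closed form as a lower incomplete gamma function (this is their~\eqref{Apeq2}), and only then applies $\Upsilon(m,x)\sim x^m/m$; the parameter $c$ cancels at that stage, recovering exactly your $a_0$. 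Your shortcut of setting $e^{-a^2/2}\to 1$ and $e^{-mr_1^2/\Omega_1^2}\to 1$ from the outset reaches the same coefficient in fewer lines, at the cost of the very bookkeeping concern you flag as the main obstacle. The paper's route sidesteps that concern by keeping the exponentials, integrating exactly, and deferring the limit; your route is more direct but places the analytic burden on justifying the uniformity of the truncation over $r_1\in[0,\sqrt{\gamma_{th}/\overline{\gamma}}]$. Step~2 of your proposal (matching against the Gamma template) is identical to the paper's.
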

\begin{proof}
See Appendix~\ref{appendix1}.
\end{proof}
\begin{remark}\label{remark1}
Unlike previous contributions in the literature (e.g.,~\cite{Tlebaldiyeva1,correlation,Tlebaldiyeva2}), where the \ac{OP} was derived in integral form, the result in~\eqref{eqs7} is a simple and accurate approximation that does not need to solve any involved integrals regarding the joint distribution of correlated fading channels of FAS in order to reach the OP metric. Moreover,~\eqref{eqs7} is valid for the practical spatial correlation models proposed in~\cite{correlation}.
\end{remark}
{An asymptotic closed-form expression for the \ac{OP} is derived to gain more insight into the impact of system parameters on the FAS performance.} For that purpose, the asymptotic \ac{OP} is expressed {as} $\mathrm{OP}^{\infty}\simeq \mathrm{G}_c\overline{\gamma}^{-\mathrm{G}_d}$~\cite{wang2003simple}, where $\mathrm{G}_c$ and $\mathrm{G}_d$ {are} the array gain and the diversity order, respectively. The asymptotic \ac{OP} is given in the following Proposition.
\begin{proposition}\label{Propos2}
The asymptotic \ac{OP} expression for the proposed FAS undergoing correlated Nakagami-$m$ RVs is given by
\begin{align}\label{eq8}
P_{\mathrm{out}}(\gamma_{th})&\simeq  \frac{(\frac{\gamma_{th}}{\beta \overline{\gamma}})^{\alpha}}{\alpha \Gamma{\left ( \alpha \right )}},
\end{align}
\end{proposition}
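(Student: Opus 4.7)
My plan is to start directly from the closed-form approximation in Proposition~\ref{Propos1} and extract the high-\ac{SNR} behavior by a Taylor expansion of the lower incomplete gamma function near the origin. Since $\overline{\gamma}\to\infty$ drives the argument $x\triangleq \gamma_{th}/(\beta\overline{\gamma})$ to zero, the asymptotic behavior of $P_{\mathrm{out}}(\gamma_{th})$ is entirely controlled by the leading term of $\Upsilon(\alpha,x)$ at $x=0$, and no further probabilistic machinery is needed.

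Concretely, I would invoke the standard power-series representation
\begin{equation*}
\Upsilon(\alpha,x)=\sum_{k=0}^{\infty}\frac{(-1)^{k}\,x^{\alpha+k}}{k!\,(\alpha+k)}=\frac{x^{\alpha}}{\alpha}+O\!\left(x^{\alpha+1}\right),
\end{equation*}
divide by $\Gamma(\alpha)$, and substitute $x=\gamma_{th}/(\beta\overline{\gamma})$ into~\eqref{eqs6}. Keeping only the $k=0$ term yields~\eqref{eq8} immediately, while the next term contributes $O(\overline{\gamma}^{-(\alpha+1)})$, which is indeed negligible compared to the leading $O(\overline{\gamma}^{-\alpha})$ contribution as $\overline{\gamma}\to\infty$; this is the only place where a (trivial) justification of the asymptotic truncation is needed.

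Finally, to match the canonical form $\mathrm{OP}^{\infty}\simeq \mathrm{G}_{c}\overline{\gamma}^{-\mathrm{G}_{d}}$, I would identify the diversity order as $\mathrm{G}_{d}=\alpha=mN$ and the array gain as $\mathrm{G}_{c}=\gamma_{th}^{\alpha}/[\alpha\Gamma(\alpha)\beta^{\alpha}]$, with $\alpha$ and $\beta$ taken from~\eqref{eqs7}. This readout exposes the dependence of the diversity order on $m$ and $N$ and concentrates the impact of the per-port powers $\Omega_{k}^{2}$ and the correlation coefficients $\mu_{k}$ into $\beta$ via the constant $a_{0}$ in~\eqref{eq7b}.

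The hard part, if any, is not really in the asymptotic step itself (which is a one-line series expansion) but rather in having arrived at the Gamma approximation of Proposition~\ref{Propos1} in the first place; once that approximation is accepted, Proposition~\ref{Propos2} is essentially a corollary obtained by small-argument expansion of $\Upsilon(\alpha,\cdot)$.
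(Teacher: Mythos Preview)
Your proposal is correct and follows essentially the same approach as the paper: the paper's proof is a single line invoking $\Upsilon(a,x)\simeq x^{a}/a$ as $x\to 0$ in~\eqref{eqs6}, which is exactly the leading-term truncation of the power series you wrote down. Your additional remarks about the $O(\overline{\gamma}^{-(\alpha+1)})$ remainder and the identification of $\mathrm{G}_d=\alpha=mN$ and $\mathrm{G}_c$ simply make explicit what the paper states more tersely.
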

\begin{proof}
To asymptotically approximate~\eqref{eqs6}, the  relationship $\Upsilon \left (a,x \right )\simeq x^a/a  $ as $x\rightarrow 0$ in~\eqref{eqs6}, is employed.
\end{proof}
\begin{remark}\label{remark2}
From~\eqref{eq8}, it is evident that the diversity order, $\mathrm{G}_d=Nm$, is directly affected by the number of ports and the severity of fading.
\end{remark}

\section{Numerical results and discussions} \label{sect:numericals}
\begin{figure}[t]
\psfrag{A}[Bc][Bc][0.7]{$\mathrm{\textit{n}=49}$}
\psfrag{B}[Bc][Bc][0.7]{$\mathrm{\textit{n}=100}$}
\psfrag{C}[Bc][Bc][0.7]{$\mathrm{\textit{n}=196}$}
 \includegraphics[width=\linewidth]{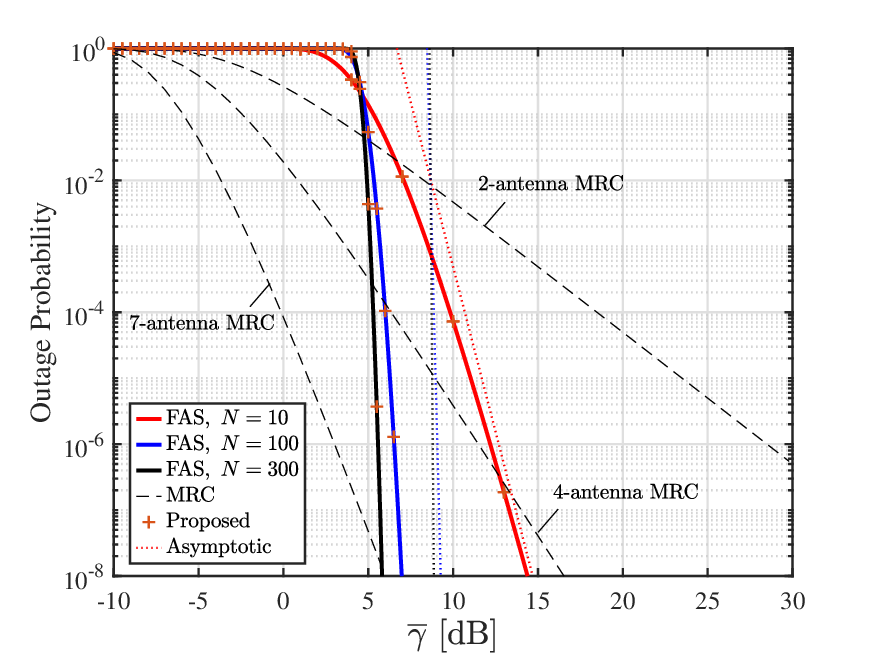}
\caption{\ac{OP} vs. $\overline{\gamma}$, for different numbers of ports of FAS by assuming $W=0.3$, $m=1$, and $\gamma_{th}=1$ dB. Markers denote the proposed approximation given in~\eqref{eqs6}, whereas the solid and dotted lines represent the analytical and the asymptotic solutions given in~\eqref{eq4} and~\eqref{eq8}, respectively.}
\label{fig1}
\vspace{-4.5mm}
\end{figure}
In this section, the effect of the system model parameters (e.g., {antenna size}, number of ports, and the severity of fading) on the \ac{OP} performance in  a FAS is addressed, as well as the goodness of the proposed approximation for the equivalent channel. Unless stated otherwise, for all figures, it is assumed that $\Omega_k=1, \forall k$, and the spatial correlation model is formulated from~\eqref{eq5}. For the sake of comparison, the conventional \ac{MRC} technique with uncorrelated antennas is
included as a reference in the \ac{OP} analysis.

\begin{figure}[h!]
\psfrag{A}[Bc][Bc][0.7]{$\mathrm{\textit{n}=49}$}
\psfrag{B}[Bc][Bc][0.7]{$\mathrm{\textit{n}=100}$}
\psfrag{C}[Bc][Bc][0.7]{$\mathrm{\textit{n}=196}$}
 \includegraphics[width=\linewidth]{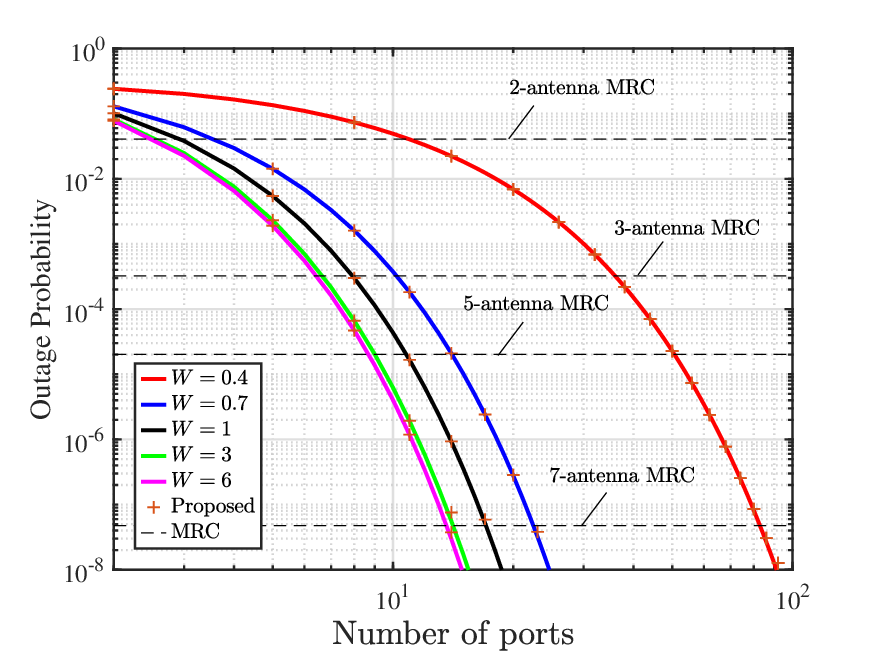}
\caption{\ac{OP} vs. number of ports by varying $W$ for $\overline{\gamma}=5$ dB, $m=1$, and $\gamma_{th}=1$ dB. Solid lines denote the exact solution in~\eqref{eq4}.}
\label{fig2}
\vspace{-4.5mm}
\end{figure}

In Fig~\ref{fig1}, we illustrate the \ac{OP} against the $\overline{\gamma}$ for different numbers of ports of FAS by setting $W=0.3$, $m=1$, and $\gamma_{th}=1$ dB. In this case, the purpose is to demonstrate the accuracy of~\eqref{eqs6} to approximate the exact \ac{OP} solution given in~\eqref{eq4}. Note that all figures show that the proposed approximations agree remarkably with the analytical ones for the whole average \ac{SNR} range. It is worthwhile to mention that~\eqref{eq4} is very demanding to calculate as the number of ports increases. Therefore, our approach with a simple mathematical fashion and negligible computational effort is highly attractive for further developments of FAS. On the other hand, based on the asymptotic plots, it is observed that the massive diversity of FAS contributes to the \ac{OP} slope {proportionally}. This means that the OP decay is steeper (i.e., better \ac{OP} performance is obtained) when the
number of ports or the $m$ parameter is increased (i.e., soft fading). Conversely, the \ac{OP} is impaired as the number of ports or $m$ decreases, and the decay is not so pronounced. These facts are in coherence with the results discussed in Remark~\ref{remark2}. Note that the asymptotic \ac{OP} quickly reaches the diversity order for $N=10$. Contrariwise, for the cases with $N=100,300$, the asymptotic \ac{OP} matches the true asymptotic behavior for extremely low operational OP values. {In addition, a crossover in the OP is exhibited when $\overline{\gamma}<5$ dB; to better understand this behavior, an important remark is in order. Let us assume the case where the OP operates without fading, i.e., only the \ac{AWGN} channel is considered. In this context, based on \cite{Simon2005}, the OP is equal to $1$ for the \ac{SNR} values below the outage threshold and identical to $0$ otherwise. Now, by assuming fading channels (e.g., Nakagami-$m$), the ${\rm OP}<1$ for those \ac{SNR} values below the outage threshold, whereas ${\rm OP}>0$ in the range of \ac{SNR} values above such threshold. In other words, the slope of the OP does not decay as abruptly because of fading fluctuations (e.g., see the MRC curves). Based on this, for the Nakagami-$m$ channel, as fading severity is soft (i.e., large $m$ values), OP curves tend to reveal an OFF/ON behavior, similar to the \ac{AWGN} case. Now, from Remark~\ref{remark2}, it is clear that the diversity order of the OP is proportional to fading severity and the number of ports, $N$. This explains the behavior of the curves below $\overline{\gamma}<5$ dB, i.e., for large $N$ values, the OP plots behave like an AWGN channel, creating such a crossover. However, it is worth mentioning that, this behavior occurs in a non-operational range of SNR and OP of a standard communication system.} Finally, the \ac{OP} performance of FAS can meet or even exceed the \ac{MRC} for high $N$ values. {In fact, this performance gap could be even more prominent in real-world rich-scattering environments with a large number of multipath clusters (e.g., an indoor area or a city street canyon). This is because the slope of the OP is governed by the $Nm$ term, which means that large values of $N$ or $m$ (i.e., rich-scattering) lead to remarkable performance gains compared to classic MRC.}

\begin{table}[t]
	\centering
	\textsf{
	\caption{ {Comparison of computational efforts between the proposed approach and exact solution. \\NMSE ranges from $-\infty$ (bad fit) to 1 (perfect fit).}} \label{elapsedtime}
	\centering
   \tiny
	\begin{tabular}{ccccccc}
		\toprule
		\multicolumn{1}{c}{\multirow{3}{*}{}} &  &\multicolumn{2}{c}{\textbf{{Average Elapsed Time (seconds)}}} & & &\textbf{{NMSE}}\\
		\cmidrule(lr){3-5} \cmidrule(lr){7-7} & 
		 &      \textbf{{Proposed}}  &  \textbf{{Proposed}} &  \textbf{{Exact}} & \textbf{{Time }} & \textbf{{Proposed}} \\ \textbf{{Fig.~\#}}& \textbf{{Case}}
		  &  \textbf{{Asymptotic OP}}  &  \textbf{{ OP}} &  \textbf{{OP}} & \textbf{{Reduction (\%)}}  & \textbf{{OP}}\\
		\cmidrule(lr){1-7} 
		& \multicolumn{1}{c}{\textbf{{$N=10$}}}& {0.0312}  & {0.0781 }   &{ 24.554}  &{99.68} & {0.99}\\
		\cmidrule(lr){2-7}
		&\multirow{1}{*}{\textbf{{$N=100$}}}  &{ 0.1406} & {0.6562} & {141.90} &{99.53} & {0.99}\\
		\cmidrule(lr){2-7}
		&\multirow{1}{*}{\textbf{{$N=300$}}}  &{0.4375} &{2.0937} &{391.26} & {99.46} &{0.99}\\
	   \cmidrule(lr){2-7}
	      	\multicolumn{1}{c}2
        {\multirow{3}{*}{}} &  &\multicolumn{2}{c}{\textbf{{Memory in Use (megabytes)}}} & & &\\
		\cmidrule(lr){3-5} & 
		 &      \textbf{{Proposed}}  &  \textbf{{Proposed}} &  \textbf{{Exact}} & \textbf{{Memory }} & \\ \textbf{{}}& \textbf{{Case}}
		  &  \textbf{{Asymptotic OP}}  & \textbf{{OP}} &  \textbf{{OP}} & \textbf{{Reduction (\%)}}  &\\
    \cmidrule(lr){2-6} 
		& \multicolumn{1}{c}{\textbf{{$N=10$}}}& {0.0095}  & {0.0211 }   &{3.5810}  &{99.41} &\\
		\cmidrule(lr){2-6}
		&\multirow{1}{*}{\textbf{{$N=100$}}}  &{0.0121} & {0.0334} & {3.6123} & {99.07} & \\
		\cmidrule(lr){2-6}
		&\multirow{1}{*}{\textbf{{$N=300$}}}  &{0.0238} & {0.0492} &{3.6726} & {98.66} & \\
	   \cmidrule(lr){1-7}
       \end{tabular}}
       \vspace{-4mm}
\end{table}
\begin{figure}[t]
\psfrag{A}[Bc][Bc][0.7]{$\mathrm{\textit{n}=49}$}
\psfrag{B}[Bc][Bc][0.7]{$\mathrm{\textit{n}=100}$}
\psfrag{C}[Bc][Bc][0.7]{$\mathrm{\textit{n}=196}$}
 \includegraphics[width=\linewidth]{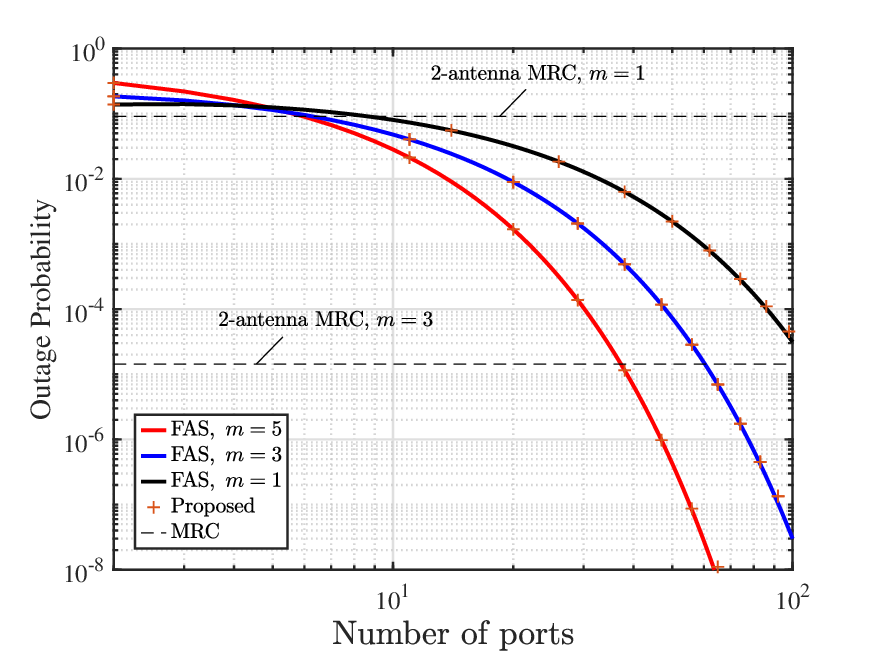}
\caption{\ac{OP} vs. number of ports with $m=\left \{ 1,3,5 \right \}$ for $W=0.6$, $\overline{\gamma}=3$ dB, and $\gamma_{th}=1$ dB. Solid lines denote the exact solution in~\eqref{eq4}.}
\label{fig3}
\vspace{-4.5mm}
\end{figure}
In Fig.~\ref{fig2}, the \ac{OP} is illustrated as a function of the number of ports of FAS for $\overline{\gamma}=5$ dB, $m=1$, and $\gamma_{th}=1$ dB. In these curves, we explore the impact of varying the antenna size over the \ac{OP} performance. It can be observed that large-size $W$ coefficients (i.e., more space in the FAS) {result in} a better \ac{OP} performance, as expected. Hence, the performance of FAS hugely relies on both the size $W$  and the number of ports $N$ of the fluid antenna at the FAS receiver.

Also, note that the \ac{OP} performance is not restricted by $W$, i.e., no outage floor exists as $N$ increases. On the other hand, FAS outperforms the MRC diversity scheme when the system is deployed with large antennas and a massive amount of ports.

In Fig.~\ref{fig3}, the \ac{OP} is illustrated as a function of the number of ports of FAS for $W=0.6$, $\overline{\gamma}=3$ dB, and $\gamma_{th}=1$ dB. Herein, {the achievable \ac{OP} is investigated} by varying the severity of fading, i.e., $m=\left \{ 1,3,5 \right \}$. Three different scenarios for the \ac{OP} behavior are observed regarding the configuration of parameters. Specifically, decreasing $m$ (i.e., hard fading) harms the performance of the \ac{OP}. Conversely, increasing $m$ (i.e., mild fading condition) favors the \ac{OP}. Furthermore, 2-antenna \ac{MRC} for $m=3$ has the lowest OP until FAS reaches $N=37$ with $m=5$. This \ac{MRC} supremacy may be due to the power gain attributed to the active \ac{RF} chains, while FAS has only one active \ac{RF} chain.

Finally, Fig.~\ref{fig4} depicts the OP versus $\gamma_{th}$ by varying $N$ for $W=2$, $\overline{\gamma}=0$ dB, and $m=1$. It can be observed that increasing $\gamma_{th}$ leads to a significant loss in the OP. The best OP performance is achieved with massive ports in the FAS. As in the previous figure, the MRC scheme (i.e., 12-antenna configuration) outperforms the FAS behavior when dealing with higher $\gamma_{th}$ values. However, this result can be reversed when the FAS combines a large antenna size together with many ports. 
\begin{figure}[t]
\psfrag{A}[Bc][Bc][0.7]{$\mathrm{\textit{n}=49}$}
\psfrag{B}[Bc][Bc][0.7]{$\mathrm{\textit{n}=100}$}
\psfrag{C}[Bc][Bc][0.7]{$\mathrm{\textit{n}=196}$}
 \includegraphics[width=\linewidth]{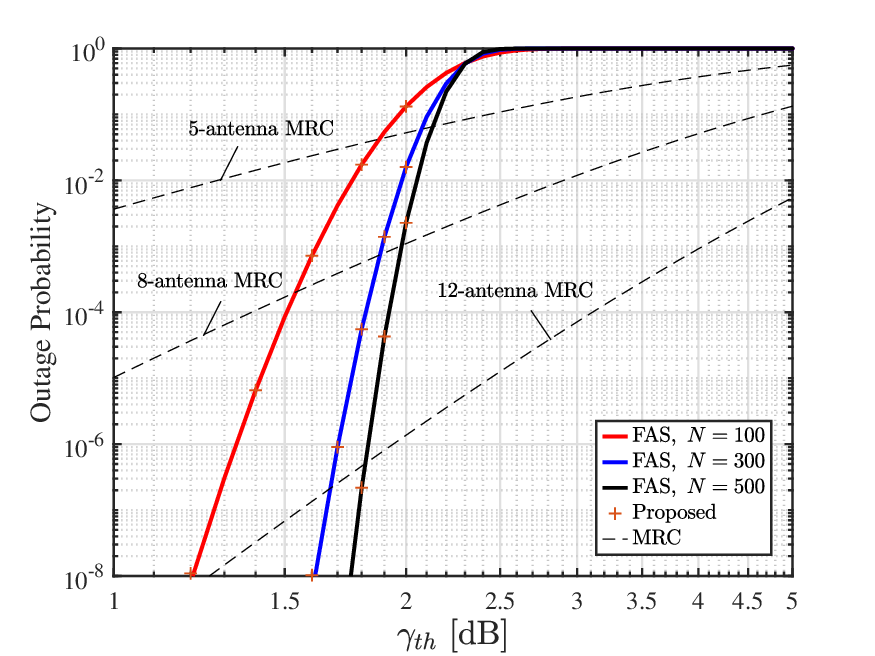}
\caption{\ac{OP} vs. $\gamma_{th}$ by varying $N$ for $W=2$, $\overline{\gamma}=0$ dB, and $m=1$. Solid lines denote the exact solution in~\eqref{eq4}.}
\label{fig4}
\vspace{-4.5mm}
\end{figure}

{To quantify the fitting accuracy, we use the widely-accepted normalized mean-square-error (NMSE) test. Specifically, the NMSE measures the goodness-of-fit between the approximate and exact OPs, denoted by $\widehat{P}_{\mathrm{out}}(\gamma_{th})$ and $P_{\mathrm{out}}(\gamma_{th})$, respectively, i.e., $\text{NMSE}=1-||P_{\mathrm{out}}(\gamma_{th})-\widehat{P}_{\mathrm{out}}(\gamma_{th})||^2/\norm{P_{\mathrm{out}}(\gamma_{th})-\mathbb{E} \left [ P_{\mathrm{out}}(\gamma_{th})  \right ]}$, where $\mathbb{E} \left [ \cdot \right ]$ and $\norm{\cdot}$ denote the expectation and the $2$-norm operators, respectively. For informative purposes, Table~\ref{elapsedtime} shows the computation time and memory in use of the proposed OP, asymptotic OP, and exact solutions  of the
illustrative examples considered in~Figs.~\ref{fig1}-\ref{fig2}\footnote{{The computation of both the exact and the proposed OP expressions have been run in Windows 10 (64-bits) Pro Intel (R) Core (TM) i7-10510U - 2.9 GHz - 16 GB RAM.}}, when they are evaluated numerically. The reader can notice that our OP proposal is faster than the exact solutions and reduces the computational effort above $99.46$\% in all the examples regarding the exact ones. Concerning the asymptotic OP, it is noticeable that the elapsed time for the complete set of examples is negligible and less than one second, influencing very slightly the number of ports in the elapsed time. Here, it can also be noted that memory in use reports an increase of more than $3$ megabytes in memory consumption of the exact formulation compared with its counterpart in all cases examined, which makes our approach more attractive to run in any computing software. Likewise, Table~\ref{elapsedtime} shows the NMSE between the approximate and exact OPs using the \texttt{goodnessOfFit} built-in function in MATLAB of the plots considered in Figs.~~\ref{fig1}-\ref{fig2}. Results in Table~\ref{elapsedtime} support the observation that our approach provides an outstanding fitting of the OP plots concerning the exact solutions, regardless of the value of $N$. }
\vspace{-5mm}
\section{Conclusions}
In this letter, we investigated the OP performance of a point-to-point FAS by assuming correlated Nakagami-$m$ fading channels. Specifically, a novel asymptotic matching method is employed to approximate the CDF of FAS {simply } without incurring multi/single fold integrals. Then, with this result, a simple closed-form expression of the OP for the underlying system was obtained. Moreover, useful insights were provided regarding how fading channel conditions and the number of ports impact the OP performance of FAS. Finally, our results can be extended to FAS diversity schemes {that still need to be} explored in the literature.
\vspace{-4mm}
\appendices
\section{Proof of Proposition \ref{Propos1} }
\label{appendix1} 
We first replace~\cite[Eq.~(3)]{Xianchang} into~\eqref{eq4}, so
\begin{align}\label{Apeq1}
P_{\mathrm{out}}(\gamma_{th})&\approx \frac{2m^ m }{\Gamma(m)\Omega_1^{2m} }\underset{I_1}{\underbrace{\int_{0}^{\sqrt{\frac{\gamma_{th}}{\overline{\gamma}}}}r_1^{2m-1}\exp\left ( -\frac{m r_1^2}{\Omega_1^{2}} \right )}}\nonumber \\ \times & \underset{I_1}{\underbrace{\prod_{k=2}^{N}\left ( \tfrac{\left ( \tfrac{m\gamma_{th}}{\Omega_k^{2}\left ( 1-\mu_k^2 \right )\overline{\gamma}}  \right )^m \exp\left (-\tfrac{m\mu_k^2r_1^2}{\Omega_1^{2}\left ( 1-\mu_k^2 \right )}  \right )}{m!} \right )dr_1}}.
\end{align}
Here, with the aid of~\cite[Eq.~(3.381.1)]{Gradshteyn}, applying a change of variables, and after some mathematical manipulations,~$I_1$ can be evaluated in exact closed-fashion as
\begin{align}\label{Apeq2}
P_{\mathrm{out}}(\gamma_{th})&\approx\frac{(m\gamma_{th})^m }{\Gamma(m)\Omega_1^{2m}\overline{\gamma}^m m!^{N-1}}\left ( \tfrac{m\gamma_{th}\left ( 1+\sum_{k=2}^{N} \tfrac{\mu_k^2}{1-\mu_k^2} \right )}{\Omega_1^2\overline{\gamma}} \right )^{-m}
\nonumber \\ & \times \Upsilon\left ( m, \tfrac{m\gamma_{th}\left ( 1+\sum_{k=2}^{N} \tfrac{\mu_k^2}{1-\mu_k^2} \right )}{\Omega_1^2\overline{\gamma}} \right )\prod_{k=2}^{N}\left ( \tfrac{m\gamma_{th}}{\Omega_k^2(1-\mu_k^2)\overline{\gamma}} \right )^m.
\end{align}
Next, by applying both relationships $i)$ $\Upsilon \left (a,x \right )\simeq x^a/a  $ as $x\rightarrow 0$, and $ii)$ $F_{g_{\mathrm{FAS}}}(x)=P_{\mathrm{out}}(x \overline{\gamma} )$
into~\eqref{Apeq2},  the asymptotic behavior of the \ac{CDF} of FAS in the form $F_{g_{\mathrm{FAS}}}(x)\simeq a_0 x^{b_0}$, can be formulated as
\vspace{-3mm}
\begin{align}\label{Apeq3}
F_{g_{\mathrm{FAS}}}(x)&\simeq\underset{a_0}{\underbrace{\frac{m^{m-1} }{\Gamma(m)(\Omega_1)^{2m}m!^{N-1}}\prod_{k=2}^{N}\left ( \frac{m}{\Omega_k^2(1-\mu_k^2)} \right )^m}} x^{  \overset{b_0}{\overbrace{mN}} }.
\end{align}
Then,~\eqref{eq4} can be approximated with a Gamma distribution by utilizing the asymptotic matching method \cite{Perim}. Therefore, the CDF and the asymptotic CDF of a Gamma approximation are respectively given by
\begin{align}\label{Apeq4}
\widetilde{F}_{g_{\mathrm{FAS}}}(x)=\frac{\Upsilon\left ( \alpha, \frac{x}{\beta} \right )}{\Gamma{(\alpha)}}, \vspace{1mm} \widetilde{F}_{g_{\mathrm{FAS}}}(x)\simeq \underset{\widetilde{a}_0}{\underbrace{\frac{1}{\beta^\alpha\alpha \Gamma(\alpha) }}} x^{\overset{\widetilde{b}_0}{\overbrace{\alpha}}}.
\end{align}
Then, by applying the asymptotic matching~\cite{Perim}, i.e., $a_0=\widetilde{a}_0$ and $b_0=\widetilde{b}_0$, the shape parameters $\alpha$ and $\beta$ can be expressed as~\eqref{eqs7}.  Finally,~\eqref{eqs6} is found with the help of~\eqref{Apeq4} by setting $P_{\mathrm{out}}(\gamma_{th}) \approx \widetilde{F}_{g_{\mathrm{FAS}}}\left ( \tfrac{\gamma_{th}}{\overline{\gamma}} \right )$. This completes the proof.

\bibliographystyle{ieeetr}
\bibliography{bibfile}

\begin{thebibliography}{10}

\bibitem{Ana}
{Martínez, Javier Otero \textit{et al.}}, ``{Toward Liquid Reconfigurable Antenna Arrays for Wireless Communications},'' {\em {IEEE Commun. Mag.}}, vol.~{60}, no.~{12}, pp.~{145--151}, {2022}.

\bibitem{Wong}
K.-K. Wong, A.~Shojaeifard, K.-F. Tong, and Y.~Zhang, ``Fluid antenna systems,'' {\em IEEE Trans. Wireless Commun.}, vol.~20, no.~3, pp.~1950--1962, 2021.

\bibitem{bruce}
K.-K. Wong, K.-F. Tong, Y.~Zhang, and Z.~Zhongbin, ``Fluid antenna system for 6g: When bruce lee inspires wireless communications,'' {\em IET Electr. Lett.}, vol.~56, no.~24, pp.~1288--1290, 2020.

\bibitem{New}
W.~K. New, K.-K. Wong, H.~Xu, K.-F. Tong, and C.-B. Chae, ``Fluid antenna system: New insights on outage probability and diversity gain,'' {\em IEEE Trans. Wireless Commun.}, pp.~1--1, 2023.

\bibitem{Yangyang}
K.~K. Wong, A.~Shojaeifard, K.-F. Tong, and Y.~Zhang, ``Performance limits of fluid antenna systems,'' {\em IEEE Wireless Commun. Lett.}, vol.~24, no.~11, pp.~2469--2472, 2020.

\bibitem{correlation}
K.~K. Wong, K.~F. Tong, Y.~Chen, and Y.~Zhang, ``Closed-form expressions for spatial correlation parameters for performance analysis of fluid antenna systems,'' {\em Electron. Lett.}, vol.~58, no.~11, pp.~454--457, 2022.

\bibitem{Ref3}
M.~Khammassi, A.~Kammoun, and M.-S. Alouini, ``A new analytical approximation of the fluid antenna system channel,'' {\em IEEE Trans. Wireless Commun.}, pp.~1--1, 2023.

\bibitem{Tlebaldiyeva1}
L.~Tlebaldiyeva, G.~Nauryzbayev, S.~Arzykulov, A.~Eltawil, and T.~Tsiftsis, ``Enhancing qos through fluid antenna systems over correlated nakagami-$m$ fading channels,'' in {\em Proc. IEEE Wireless Commun. Netw. Conf. (WCNC)}, pp.~78--83, 2022.

\bibitem{Ref4}
C.~Skouroumounis and I.~Krikidis, ``Fluid antenna with linear mmse channel estimation for large-scale cellular networks,'' {\em IEEE Trans. Commun.}, vol.~71, no.~2, pp.~1112--1125, 2023.

\bibitem{Ref5}
F.~{Rostami Ghadi}, K.-K. {Wong}, F.~J. {Lopez-Martinez}, and K.-F. {Tong}, ``Copula-based performance analysis for fluid antenna systems under arbitrary fading channels,'' {\em arXiv:2305.09553}, pp.~1--5, 2023.

\bibitem{Tlebaldiyeva2}
L.~Tlebaldiyeva, S.~Arzykulov, R.~K. M, L.~Xingwang, and N.~Galymzhan, ``Outage performance of fluid antenna system (fas)-aided terahertz communication networks,'' in {\em n Proc. IEEE Int. Conf. Commun. (ICC)}, pp.~78--83, 2023.

\bibitem{Perim}
V.~{Perim}, J.~D.~V. {Sánchez}, and J.~C. S.~S. {Filho}, ``Asymptotically exact approximations to generalized fading sum statistics,'' {\em IEEE Trans. Wireless Commun.}, vol.~19, no.~1, pp.~205--217, 2020.

\bibitem{wang2003simple}
{Zhengdao Wang} and G.~B. {Giannakis}, ``A simple and general parameterization quantifying performance in fading channels,'' {\em IEEE Trans. Commun.}, vol.~51, no.~8, pp.~1389--1398, 2003.

\bibitem{Abramowitz}
M.~Abramowitz and I.~A. Stegun, {\em Handbook of Mathematical Functions}.
\newblock US Dept. Of Commerce, National Bureau Of Standards, Washington DC, 1972.

\bibitem{Simon2005}
{Simon, Marvin K. and Alouini, Mohamed-Slim}, {\em {Digital Communications over Fading Channels}}.
\newblock {John Wiley \& Sons, Inc.}, {2}~ed., {2005}.

\bibitem{Xianchang}
X.~Li and J.~Cheng, ``Asymptotic error rate analysis of selection combining on generalized correlated nakagami-$m$ channels,'' {\em IEEE Trans. Commun.}, vol.~60, no.~7, pp.~1765--1771, 2012.

\bibitem{Gradshteyn}
I.~S. Gradshteyn and I.~M. Ryzhik, {\em Table of Integrals, Series and Products}.
\newblock San Diego, CA, USA: Academic Press, 7~ed., 2007.

\end{thebibliography}

\end{document}